\newtheorem{thm}{Theorem}
\newtheorem{cond}{Condition}
\newtheorem{condprime}{Condition}
\newtheorem{lemma}{Lemma}
\pgfplotsset{compat=1.12}
\begin{document}

\title{A Bilinear Equalizer for Massive MIMO Systems}

\author{David Neumann, Thomas Wiese, Michael Joham, and Wolfgang Utschick%
        \thanks{The authors are with the Professur f\"ur Methoden der Signalverarbeitung, Technische Universit\"at M\"unchen, 80290 M\"unchen, Germany (email: \{d.neumann, thomas.wiese, joham, utschick\}@tum.de).}%
}

\markboth{}%
{}

\maketitle

\begin{abstract}
    We present a novel approach for low-complexity equalizer design well-suited for cellular massive MIMO systems.
    Our design allows to exploit the channel structure in terms of covariance matrices to improve the performance in the face of pilot-contamination, while basically keeping the complexity of a matched filter.
    This is achieved by restricting the equalizer to functions which are bilinear in the received data signals and the observations from a training phase.
    The proposed design generalizes several previous approaches to equalizer design for massive MIMO.
    We show by asymptotic analysis that with the proposed design the achievable rate grows without bound for growing numbers of antennas even in the presence of pilot-contamination.
    We demonstrate with numerical results that the proposed design is competitive with more complex approaches in a practical cellular setup.
\end{abstract}

%

\IEEEpeerreviewmaketitle

\section{Background} 

Recent research into massive MIMO systems, i.e., cellular networks with a large number of antennas at the base stations~\cite{rusek_scaling_2013,marzetta_noncooperative_2010}, which serve a large number of users,
have led to a rediscovery of the dimensionality bottleneck imposed by the fixed coherence interval of the channel~\cite{adhikary_joint_2013}.
Information theoretic results on non-coherent block fading channels~\cite{zheng_communication_2002} indicate that the number of simultaneously transmitted interference free data streams cannot exceed half of the coherence interval.
In scenarios with mobile users, this result severely limits the potential multiplexing gains of a massive MIMO system.

An overloaded system where we serve more users than we have pilot-sequences, violates these conditions. 
We are not able to train all users orthogonally, which leads to interference during the training, so called pilot-contamination~\cite{bjornson_massive_2016,hoydis_massive_2013,jose_pilot_2011}.
Pilot-contamination can have a severe impact on the achievable data rate~\cite{marzetta_noncooperative_2010}.
However, the result in~\cite{zheng_communication_2002} only holds for i.i.d. channel coefficients.
In fact, structure of the channel vectors in form of second-order or subspace information can be exploited to break out of the dimensionality bottleneck~\cite{shariati_low-complexity_2014,yin_robust_2016,bjornson_massive_2017,neumann_cdi_2015,neumann_rate-balancing_2015,adhikary_joint_2013,yin_coordinated_2013}.

In our view, the challenge for receive and transmit filter design in massive MIMO is to reduce the complexity such that it is actually possible to implement those filters in practice, but at the same time exploit structural information to reduce the impact of the limited coherence interval.
The proposed optimal bilinear equalizer (OBE) design achieves the desired trade-off. 
For each fast fading channel realization, the OBE design has a complexity that is similar to the matched filter (MF) based on minimum mean square error (MMSE) estimates of the channel vectors, pushing the more demanding computations into the time scale of the variation of the channel covariance matrices.
Yet, the OBE is able to fully exploit the structural properties captured in the covariance matrices.
This leads to a significant increase in performance compared to the MF.

For ease of exposition of our novel approach, we focus on the cellular uplink.
This could be a cell-free system with distributed antennas~\cite{ngo_cell-free_2015}, but also a large-scale base station in a classical multi-cell network.
In our network, a base-station with $M$ antennas receives signals from $K$ single-antenna terminals that transmit simultaneously.
The $K$ users include the users served by the base station, but may also include additional interfering users from neighboring cells.
We assume flat fading channels with a fixed coherence interval of $T$ channel accesses.
The channel in one coherence interval from user $k$ to the base station, $\vh_k \sim \CN(\zeros, \vC_k)$, is circularly symmetric, complex Gaussian distributed, with zero mean and covariance matrix $\vC_k$.

For the training, $\Ttr < T$ channel accesses are used to transmit pilot symbols.
Specifically, each user transmits one of $\Ttr$ predefined orthogonal pilot sequences.
We assume that there are more users than pilots ($K > \Ttr$).
Consequently, at least one of the training sequences is transmitted by multiple users.

The base station correlates the signals received during the training phase with each of the pilot sequences.
This leads to the least-squares (LS) estimates of the channel vector for user $k$
\begin{align}
    \vobs_k = \vh_k + \sum_{n\in \set I_k} \vh_n + \frac{1}{\sqrt{\ptr}}\vw_k
\end{align}
where $\set I_k$ denotes the set of users that transmit the same pilot sequence as user $k$ (user $k$ itself is not in the set) and $\ptr$ is the equivalent SNR of the training and the noise is normalized such that $\vw_k \sim \CN(\zeros,\id)$.
Note that the LS estimate is identical for users that employ the same pilot sequence, i.e., $\vobs_k = \vobs_n$ if $k$ and $n$ use the same pilot sequence.

The covariance matrix of the LS estimates,
\begin{align}
    \vQ_k = \expec[\vobs_k\vobs_k\he] = \vC_k + \sum_{n\in \set I_k} \vC_n + \frac{1}{\ptr}\id
\end{align}
is simply a sum of the involved channel covariance matrices plus the noise covariance matrix, since we assume that all channel vectors are independent.

In the data-transmission phase, the base station receives signals
\begin{align}
    \vy = \sum_k \sqrt{p_k} \vh_k s_k + \vv
\end{align}
with the independent data symbols $s_k \sim \CN(0,1)$ and normalized transmit power $p_k$ as well as additive white Gaussian noise $\vv \sim \CN(\zeros, \id)$.
We assume separate linear processing for each user.
That is, for each user we filter the received signal $\vy$ with a linear filter $\vg_k$ to get the estimate 
\begin{align}\label{eq:uplink_signal}
    \hat s_k = \sqrt{p_k}\vg_k\he \vh_k s_k + \sum_{n\neq k} \sqrt{p_n}\vg_k\he\vh_n s_n + \vg_k\he\vv.
\end{align}
The transmit signals from the other users are considered as noise.

The MF based on the LS estimate, $\vg_k = \vobs_k$, is regarded as a good match for massive MIMO systems because of its low complexity and satisfactory performance for a large number of antennas due to the asymptotic orthogonality of the channel vectors~\cite{rusek_scaling_2013}.
However, the MF suffers from severe performance degradation in the presence of pilot-contamination~\cite{jose_pilot_2011,marzetta_noncooperative_2010}.
In the literature, several improved filter or precoder designs were proposed which can be interpreted as an adaption of the MF concept to a scenario with imperfect channel state information (CSI).
One example is the MF based on the minimum mean square error (MMSE) estimate of the channel~\cite{yin_coordinated_2013}, i.e.,
\begin{equation}\label{eq:mmse_mf}
    \vg_k = \vC_k \vQ_k\inv \vobs_k.
\end{equation}
However, even when the MMSE estimate is used instead of the LS estimate, the achievable data rate saturates in the presence of pilot contamination.

The MMSE-MF is of the form
\begin{align}\label{eq:gmf_design}
    \vg_k = \vA_k \vobs_k
\end{align}
where $\vA_k$ is a deterministic matrix that only depends on the channel statistics.
We refer to this kind of filter with an arbitrary $\vA_k$ as a bilinear equalizer (BE).
The reason is that, for a BE, the estimate 
\begin{align}\label{eq:bilinear}
    \hat s_k = \vobs_k\he \vA_k\he \vy
\end{align}
is not only linear in the data vector $\vy$, which is a common restriction, but also linear in the observations $\vobs_k$.

In other words, we are designing a bilinear estimator for the data symbols that only depends on the channel statistics.
Instead of choosing an ad-hoc design for the BE, such as the MMSE channel estimate in~\eqref{eq:mmse_mf}, we will show how to calculate the optimal transformations $\vA_k$ with respect to a lower bound on the achievable rates.
This leads to an optimal BE (OBE) -- in terms of this lower bound -- that is able to suppress the interference caused by pilot contamination as the number of antennas grows large.

The calculations of the filters $\vg_k$ need only one matrix-vector multiplication per user.
The complexity of the multiplication depends on the structure of the matrix $\vA_k$.
We will learn later how to exploit structure of the covariance matrices for common array geometries to reduce the complexity from $O(M^2)$ to $O(M)$.
That is, we can reduce the total complexity of the filter calculations to $O(MK)$ floating point operations.
In comparison, for a (regularized) zero-forcing filter, such as the linear MMSE (LMMSE) filter discussed in~\cite{neumann_mse_2017,bjornson_massive_2017}, the complexity is $O(MK^2)$ under the same assumptions and with a larger constant factor.

A special case of the OBE approach for a network MIMO setup and a channel model with rich scattering has been discussed in~\cite{adhikary_uplink_2017,ashikhmin_pilot_2012}.
We comment on the connection in Section~\ref{sec:structure}.

In summary, this work contains the following contributions.
\begin{itemize}
    \item We introduce the BE as an approach to low-complexity filter design for a scenario with imperfect CSI.
    \item We derive the optimal BE with respect to a lower bound on the achievable rates, denoted as OBE.
    \item We show that for certain structure of the covariance matrices, the complexity of calculating the OBE reduces significantly and discuss how this result applies to practically relevant array geometries.
    \item Under mild conditions on the covariance matrices, we show that the OBE achieves asymptotically linear scaling of the SINR with respect to the number of antennas, even in the presence of pilot-contamination.
    \item We compare the asymptotic result with the asymptotically achievable SINR of the more complex linear MMSE filter.
    \item We show that even with a non-optimal BE and only partial knowledge of the covariance matrices, we can achieve asymptotically linear scaling of the SINR with respect to the number of antennas. 
    \item We showcase the performance of optimal OBE and sub-optimal BE design in comparison with state-of-the-art methods via numerical simulations.
\end{itemize}

\subsection*{Notation and Definitions}

For a matrix $\vX$, we denote the transpose as $\vX\tp$ and the conjugate transpose as $\vX\he$.
The Frobenius norm is denoted by $\norm{\vX}_F$ and the spectral norm by $\norm{\vX}_2$.
The operator $\vec(\vX)$ yields the vector with the stacked columns of $\vX$.

The matrix $\id$ denotes the identity matrix and the vector $\ve_k$ denotes the $k$-th canonical unit vector.

We only consider limits where the number of antennas $M$ goes to infinity.
In this context, asymptotic equivalence, of two sequences $a_M$ and $b_M$, denoted as $a_M \asymp b_M$, means that
\begin{align}
    \lim_{M\rightarrow\infty} a_M/b_M = 1.
\end{align}
If we know, e.g., that $\liminf_{M\rightarrow \infty} b_M/M > 0$, we have $b_M \asymp a_M$ iff
\begin{align}
    \lim_{M\rightarrow\infty} a_M/M - b_M/M = 0.
\end{align}
For matrices and vectors with fixed dimension, asymptotic equivalence means element-wise asymptotic equivalence.

For matrices that grow with the number of antennas, we need a different, ``weak'', notion of asymptotic equivalence~(cf.~\cite{gray_toeplitz_2006}). 
Two matrices $\vA\in \C^{M\times M}$ and $\vB\in \C^{M\times M}$ are asymptotically equivalent, denoted by $\vA\asymp_w\vB$, if
\begin{align}
    \lim_{M\rightarrow \infty} \frac{1}{M} \norm{\vA - \vB}_F = 0.
\end{align}

\section{Optimal Bilinear Equalizer}
\label{sec:uplink}

In this section, we discuss a design of the transformation $\vA_k$ in~\eqref{eq:gmf_design}, that maximizes a lower bound on the achievable rates.
We then show in the next section how we can exploit structure of the covariance matrices to reduce the computational complexity.
Finally, we show in Section~\ref{sec:asymptotic} that when we use these optimal transformations, the achievable rate grows without bound for $M$ going to infinity.

The signal model for a user $k$ in~\eqref{eq:uplink_signal} is equivalent to a SISO channel
\begin{align}
    \hat s_k &= \underbrace{\sqrt{p_k}\vg_k\he \vh_k}_{h_\text{eff}} s_k + \underbrace{\sum_{n\neq k} \sqrt{p_n}\vg_k\he\vh_n s_n + \vg_k\he\vv}_{v_\text{eff}}\\
    &= h_\text{eff} s_k + v_\text{eff}.
\end{align}
Because of the imperfect CSI, we cannot provide a closed form expression for the mutual information $I(s_k;\hat s_k)$.
Instead we use a lower bound that is often found in massive MIMO literature (e.g.~\cite{bjornson_massive_2017,bjornson_massive_2016}), which is based on the worst-case noise bound in~\cite{medard_effect_2000,hassibi_how_2003}.
For the bound, we need the MMSE estimate of the effective channel $\hat h_\text{eff} = \expec[h_\text{eff}]$.
Here, we ignore any side information that we have on the channel and thus the MMSE estimate is simply the expectation.
We get a different bound if we condition the expectation on the available observations. 
We discuss the bound that uses the conditional expectations in the context of the LMMSE filter in Appendix~\ref{app:lmmse}.

With the estimate $\hat h_\text{eff}$, the lower bound evaluates to
\begin{align}
    I(s_k;\hat s_k) \geq \log_2(1+ \gamma_k\ul)
\end{align}
with the SINR expression
\begin{align}
    \gamma_k\ul &= \frac{\lvert\hat h_\text{eff}\rvert^2}{ \var(\hat h_\text{eff}) + \var(v_\text{eff}) } \\
                &= \frac{ p_k\abs{\expec[\vg_k\he\vh_k]}^2}{  p_k \var(\vg_k\he\vh_k) + \sum_{n\neq k} p_n \expec[\abs{\vg_k\he\vh_n}^2] + \expec[\vg_k\he\vg_k] }.\label{eq:sinr_lower_bound}
\end{align}

For our BE approach with $\vg_k = \vA_k \vobs_k$, the channel vectors and the filters are jointly Gaussian distributed and thus, the expectations can be calculated analytically leading to~(cf.~\cite{bjornson_massive_2016-1})
\begin{equation}\label{eq:uplink_sinr_gmf}
\gamma_k\ul = \frac{p_k\abs{\tr(\vC_k \vA_k)}^2}{ \tr(\vZ\vA_k\vQ_k\vA_k\he) + \sum_{n \in \set I_k} p_n \abs{ \tr(\vC_n\vA_k)}^2 }
\end{equation}
with the covariance matrix 
\begin{align}
    \vZ = \id + \sum_n p_n \vC_n
\end{align}
of the received signal $\vy$.
For the following analysis, we replace the transformations with their vectorized form $\va_k = \vec(\vA_k) \in \mathbb C^{M^2}$ which leads to
\begin{equation}\label{eq:uplink_sinr_gmf_vec}
    \gamma_k\ul = \frac{p_k\abs{\vc_k\he\va_k}^2}{ \va_k\he (\vQ_k\tp\otimes\vZ) \va_k + \sum_{n \in \set I_k} p_n \abs{ \vc_n\he\va_k}^2 }.
\end{equation}
where $\vc_k = \vec(\vC_k)$ and we used the fact that 
\begin{align}
    \tr(\vA_k\he \vZ \vA_k \vQ_k) = \va_k \he \vec(\vZ \vA_k \vQ_k) = \va_k\he (\vQ_k\tp \otimes \vZ) \va_k.
\end{align}
The summation over the interfering users $\set I_k$ in the denominator of~\eqref{eq:uplink_sinr_gmf_vec} represents the interference caused by pilot-contamination, which is additional to the general interference that appears even when all users are trained with orthogonal training sequences.

If we compare the SINR in~\eqref{eq:uplink_sinr_gmf_vec} with the instantaneous SINR for perfect CSI at the receiver (with general noise covariance matrix $\vSigma$),
\begin{align}
    \gamma_k^\text{CSI} = \frac{\abs{\vg_k\he\vh_k}^2}{\vg_k\he\vSigma\vg_k + \sum_n \abs{\vg_k\he \vh_n}^2}
\end{align}
we note that the structure is equivalent and only the dimension of the vectors is increased from $M$ to $M^2$.
The linear filters are replaced by the vectorized transformations and the channel vectors are replaced by the channel covariance matrices.

The structural similarity allows us to apply various techniques from the MIMO literature within the BE framework.
For example, we can use the well-known uplink-downlink SINR duality to design BE precoders for downlink transmission (cf. also~\cite{bjornson_massive_2016}).
We can also apply methods from the literature for power allocation, network utility maximization, etc.
Since the SINRs in~\eqref{eq:uplink_sinr_gmf} only depend on the channel statistics, all of the various optimizations for resource allocation only have to be done in the time scale at which the covariance matrices change.

Also analogously to the case of perfect CSI, we can explicitly calculate the optimal transformations in the uplink.
Since $\gamma_k\ul$ only depends on the transformation $\va_k$ and not on the $\va_n, n\neq k$, we are able to calculate the optimizer
\begin{align}
    \va_k^\star &= \argmax_{\va_k} \gamma_k\ul \notag \\ 
                &= ( \vQ_k\tp\otimes\vZ + \sum_{n\in \set I_k} p_n \vc_n \vc_n\he )\inv \vc_k \label{eq:opt_uplink_trafo}
\end{align}
with the corresponding optimal SINR
\begin{align}\label{eq:uplink_sinr_gmf_opt}
    \gamma_k^\star &=  p_k \vc_k\he ( \vQ_k\tp\otimes\vZ + \sum_{n\in \set I_k} p_n \vc_n \vc_n\he )\inv \vc_k\\
                   &= p_k \vc_k\he \va_k^\star.
\end{align}

We refer to the BE $\vg_k = \vA_k^\star \vobs_k$ that is calculated using the optimal transformation $\vec(\vA_k^\star) = \va_k^\star$ as OBE.
We will see in Section~\ref{sec:asymptotic} that the OBE is able to deal with pilot-contamination in the sense that the SINR does not saturate for large numbers of antennas.
The OBE can be interpreted as an adaption of the classical MF to the case of imperfect CSI by using a statistical pre-filter.

\subsection*{SINR Reformulation}

For asymptotic analysis and efficient numerical implementation, we reformulate the optimal SINRs in~\eqref{eq:uplink_sinr_gmf_opt} and the optimal transformations $\va_k^\star$ in~\eqref{eq:opt_uplink_trafo} into a more convenient form.
To this end, we consider the set of users $\Omega_p = \{1,\ldots, K_p\}$ which use the pilot sequence $p$.
Remember that the observation $\vobs_k$ is the same for all users $k \in \Omega_p$ and has the covariance matrix $\vQ_1 = \ldots=\vQ_{K_p} = \vQ$ (we drop the user index for notational convenience).

We collect the vectorized covariance matrices into
\begin{equation}
    \vXi = [\vc_1, \ldots, \vc_{K_p}] \in \mathbb C^{M^2 \times K_p}.
\end{equation}
The matrix $\vXi$ basically takes the role that the channel matrix has for perfect CSI.

To give a succinct reformulation of the SINR we additionally define the matrix $\vGamma \in \C^{K_p\times K_p}$ with the elements
\begin{equation}\label{eq:gamma_elementwise}
    [\vGamma]_{nk} = \tr(\vC_n \vZ\inv \vC_k \vQ\inv).
\end{equation}
Using the Kronecker-product vectorization trick, this matrix can be also stated as
\begin{equation}\label{eq:gamma}
    \vGamma = \vXi\he (\vQ^{-\Tr}\otimes \vZ\inv) \vXi/M
\end{equation}
which is the form we will use for the asymptotic analysis in Section~\ref{sec:asymptotic}.

\begin{lemma}\label{thm:sinr_reformulation}
    For $p_k > 0\;\forall k$ and with $\vP = \diag(p_1,\ldots,p_{K_p})$, the uplink SINRs in~\eqref{eq:uplink_sinr_gmf_opt} can be equivalently stated as 
    \begin{equation}\label{eq:uplink_sinr_gmf_reform}
        \gamma_k^\star = M p_k\frac{\ve_k\tp \vGamma (\frac{1}{M}\vP\inv + \vGamma)\inv \ve_k }{\ve_k\tp (\frac{1}{M}\vP\inv + \vGamma)\inv \ve_k}.
    \end{equation}
    The optimal transformations can be alternatively expressed as
    \begin{align}\label{eq:mmse_trafo}
        \tilde\va_k^\star = \frac{1}{p_k} (\vQ\tpi \otimes \vZ\inv) \vXi \big(\vP\inv + \vGamma\big)\inv \ve_k
    \end{align}
    which is a scaled version of $\va_k^\star$ in~\eqref{eq:opt_uplink_trafo}.
\end{lemma}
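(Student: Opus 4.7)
The plan is to reduce the $M^2\times M^2$ inverse in~\eqref{eq:opt_uplink_trafo} to a $K_p\times K_p$ inverse by combining the Sherman--Morrison and Woodbury identities. The structural observation that drives everything is that all covariance vectors $\vc_n$ with $n\in\Omega_p$ lie in the column space of $\vXi$, so after the reductions all inversions happen inside the fixed $K_p$-dimensional subspace, with the Gram matrix $\vXi\he(\vQ\tpi\otimes\vZ\inv)\vXi = M\vGamma$ playing the key role.

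Set $\vM := \vQ\tp\otimes\vZ$ for brevity. Since $\set I_k = \Omega_p\setminus\{k\}$ and $\vc_k = \vXi\ve_k$,
\begin{equation*}
    \sum_{n\in\set I_k} p_n \vc_n\vc_n\he = \vXi\vP\vXi\he - p_k\vc_k\vc_k\he.
\end{equation*}
Applying Sherman--Morrison to the rank-one subtraction $-p_k\vc_k\vc_k\he$ shows that, up to a free positive scalar,
\begin{equation*}
    \va_k^\star\propto(\vM+\vXi\vP\vXi\he)\inv\vc_k, \qquad \gamma_k^\star = \frac{p_k\alpha_k}{1-p_k\alpha_k},
\end{equation*}
with $\alpha_k := \vc_k\he(\vM+\vXi\vP\vXi\he)\inv\vc_k$.

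I would then apply Woodbury in push-through form to verify
\begin{equation*}
    (\vM+\vXi\vP\vXi\he)\inv\vXi = \vM\inv\vXi\,(\vP\inv+M\vGamma)\inv\vP\inv,
\end{equation*}
so that, using $\vP\inv\ve_k=\ve_k/p_k$,
\begin{equation*}
    (\vM+\vXi\vP\vXi\he)\inv\vc_k = \frac{1}{p_k}\vM\inv\vXi\,(\vP\inv+M\vGamma)\inv\ve_k.
\end{equation*}
This is the claimed transformation~\eqref{eq:mmse_trafo} once the factor $M$ is pulled out of the inner inverse to expose $(\tfrac{1}{M}\vP\inv+\vGamma)\inv$. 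Substituting the same expression into $\alpha_k$ yields $p_k\alpha_k = \ve_k\tp M\vGamma(\vP\inv+M\vGamma)\inv\ve_k$, and simplifying $p_k\alpha_k/(1-p_k\alpha_k)$ via the elementary identity $\vGamma(\tfrac{1}{M}\vP\inv+\vGamma)\inv = \id - \tfrac{1}{M}\vP\inv(\tfrac{1}{M}\vP\inv+\vGamma)\inv$ produces the ratio form of~\eqref{eq:uplink_sinr_gmf_reform}.

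The main obstacle is purely bookkeeping---propagating the factors of $M$ and $p_k$ consistently through the $\vXi\he\vM\inv\vXi = M\vGamma$ substitution and tracking the free scalar normalization of $\va_k^\star$. The Sherman--Morrison and Woodbury steps themselves are routine, and the hypothesis $p_k > 0$ for all $k$ is exactly what is needed so that $\vP$ is invertible and Woodbury applies.
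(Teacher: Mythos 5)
Your proposal is correct and follows essentially the same route as the paper's proof: Sherman--Morrison on the rank-one term $-p_k\vc_k\vc_k\he$ to reduce to $\vUpsilon\inv\vc_k$ up to a scalar, then the Woodbury push-through identity to expose $(\vP\inv + M\vGamma)\inv$, and finally the identity $\vGamma(\tfrac{1}{M}\vP\inv+\vGamma)\inv = \id - \tfrac{1}{M}\vP\inv(\tfrac{1}{M}\vP\inv+\vGamma)\inv$ to obtain the ratio form. The bookkeeping of the factors of $M$ and $p_k$ as you describe it works out exactly as in the paper.
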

\begin{proof}
    See Appendix~\ref{app:sinr_reformulation}.
\end{proof}

Note that the transformations in~\eqref{eq:mmse_trafo} are scaled versions of the optimal transformations in~\eqref{eq:opt_uplink_trafo}.
But since scaling does not effect the SINR in~\eqref{eq:uplink_sinr_gmf}, both choices are optimal.
In fact, in analogy to the LMMSE filter, which minimizes the MSE and maximizes a bound on the mutual information at the same time, the transformations in~\eqref{eq:mmse_trafo} also minimize the MSE $\expec[\lvert s_k - \vobs_k\he\vA_k\he\vy\rvert^2]$~\cite{neumann_mse_2017}.

The complexity of calculating the transformations $\vA_k^\star$ is dominated by the calculation of the matrix $\vGamma$.
An efficient way to calculate $\vGamma$ is to first calculate $\vZ\inv\vC_k\vQ\inv$ for all $k\in\Omega_p$ and then calculate the inner products with all $\vC_n$.
This procedure leads to a complexity of $\order(M^3 K)$ floating point operations.
For general covariance matrices, the calculation of the OBEs $\vg_k = \vA_k^\star \vobs_k$ needs $\order(M^2K)$ operations in total.
Thus, as long as the coherence interval of the covariance matrices in terms of channel coherence intervals is much larger than the number of antennas, the complexity of the calculation of the $\vA_k^\star$ is negligible.
The calculation of the LMMSE filter also needs $\order(M^2K)$ operations, but with a larger constant factor, since several matrix-vector multiplications are required per user.

In literature, different assumptions appear regarding the coherence interval of the covariance matrices, ranging from 40 channel coherence intervals~\cite{adhikary_uplink_2017} to 25000~\cite{bjornson_massive_2016-1}.
The smaller number is not larger than the number of antennas in a typical massive MIMO system.
For this reason, but also because of the high complexity of the filter calculations, we discuss in the following how to reduce the computational complexity for common antenna array geometries.

\section{Exploiting Array Structure}
\label{sec:structure}

To reduce the complexity of the calculations of the transformations $\vA_k^\star$, but also the calculation of the OBEs $\vg_k = \vA_k^\star \vobs_k$, we exploit common structure of the covariance matrices.

We notice that the vectorized optimal transformations $\va_k^\star$ in~\eqref{eq:mmse_trafo} are linear combinations of vectors $(\vQ\tpi \otimes \vZ\inv) \vc_n$.
Reverting the vectorization, we see that 
\begin{equation}\label{eq:transformation_structure}
    \vA_k^\star =  \vZ\inv \left( \sum_{\ell \in \Omega_p} \sigma_{k\ell} \vC_\ell \right) \vQ\inv
\end{equation}
for some $\sigma_{k\ell}$.
For the following discussion it is important to remember the structure of
\begin{align}
    \vZ = \id + \sum_k p_k \vC_k,
\end{align}
and
\begin{align}
    \vQ = \frac{1}{\ptr}\id + \sum_{k\in\Omega_p} \vC_k.
\end{align}

Taking a close look at~\eqref{eq:transformation_structure}, we realize that certain structure of the covariance matrices carries over to the transformations $\vA_k^\star$. 
An important example are covariance matrices that share the same eigenbasis, i.e.,
\begin{align}
    \vC_k = \vU \diag(\hat\vc_k) \vU\he \;\;\forall k
\end{align}
for some unitary $\vU$.
We can easily verify that the matrices $\vZ$ and $\vQ$ and thus the optimal transformations have the same eigenbasis as well, i.e., we have 
\begin{align}
    \vA_k^\star = \vU \diag(\hat\va_k^\star) \vU\he.
\end{align}
Therefore, if the channel covariance matrices have the desired structure, we can simply transform the incoming signals $\vy$ and $\vobs_k$ by $\vU\he$ (cf.~\eqref{eq:bilinear}).
Then we no longer have to consider the eigenbasis $\vU$, but we can work with the diagonal matrices containing the eigenvalues $\hat \vc_k$ as if we had diagonal channel covariance matrices (and in fact we have diagonal covariance matrices with respect to the basis $\vU$).

If the covariance matrices are diagonal, the matrices $\widehat\vZ = \diag(\hat\vz)$ and $\widehat\vQ = \diag(\hat\vq)$ are diagonal as well.
The operation which dominates the complexity of calculating the transformations $\vA_k^\star$ in~\eqref{eq:mmse_trafo} is still the calculation of the matrix $\vGamma$.
For diagonal covariance matrices, the computational complexity reduces to $\order(M\sum_p K_p^2)$.
If the number of users $K_p$ that transmit a pilot sequence $p$ is the same for all pilot sequences, i.e., $K_p = K/\Ttr$, we have a complexity of $\order(MK^2/\Ttr)$.
This is lower than the $\order(MK^2)$ operations required for the LMMSE filter \emph{in each channel coherence interval} for diagonal covariance matrices.
The complexity to calculate all OBEs $\vg_k = \vA_k^\star \vobs_k$ reduces to $\order(MK)$ which is significantly lower than the complexity of the LMMSE filter.
We see that even if the covariance matrix is only constant for a relatively small number of channel coherence intervals, the calculation of the transformations $\vA_k^\star$ does not significantly effect the total computational complexity.

One major advantage of structured covariance matrices, besides the gain in computational complexity, is the simplified estimation of the second order statistics.
With the common eigenbasis we only have one parameter per spatial direction and the parameters can be estimated independently~(cf.~\cite{neumann_low-complexity_2015}).
Thus, for the case of estimated covariance matrices, the assumption of a certain structure might actually improve the performance compared to the general case without any assumptions.

In the following, we give several examples that exhibit special structures of the channel covariance matrices that carrys over to the optimal transformations $\vA_k^\star$.

\subsection{Cell-free Scenario}

For distributed antennas~\cite{ngo_cell-free_2015,ngo_cell-free_2017}, we typically have diagonal covariance matrices, i.e., $\vU = \id$.
Consequently we have diagonal transformations $\vA_k$.
The calculation of the linear filters reduces to an element-wise multiplication $\vg_k = \hat\va_k \odot \vobs_k$ with linear complexity in the number of antennas.
As a side-effect of the diagonal transformations $\vA_k$, it is no longer necessary to collect the instantaneous observations $\vobs_k$ at a central hub.
We only need to send the symbols $s_k$ for the users to all antennas which can then use the local observations to calculate the transmit signal.
The estimated variances at the different antennas have to be collected, which requires a much lower overhead than collecting instantaneous CSI.

This special case appears in similar form in previous work~\cite{ngo_cell-free_2017}.
In contrast to the BE design, the authors interpret the coefficients in $\hat\va_k$ as power allocation and thus only allow positive values. 
With this restriction the complete elimination of pilot-contamination in the asymptotic limit is no longer possible.
Our approach achieves asymptotically optimal performance in a cell-free network with the same requirements on computational complexity and communication overhead as the approach in~\cite{ngo_cell-free_2017}.

\subsection{Network MIMO}

In a network MIMO scenario, all base stations in (a neighborhood of) the network jointly process the received signals.
If we have the unfavorable case that the covariance matrices of the channels from the users to one base station are scaled identity matrices, we get full covariance matrices of the form
\begin{align}
    \vC_k = \blkdiag(\beta_{1k} \id, \ldots, \beta_{Lk}\id)
\end{align}
where $L$ is the number of base stations in the network.
If we look again at \eqref{eq:transformation_structure}, we note that this kind of block structure also transfers to the transformations $\vA_k^\star$.
That is, the transformations $\vA_k^\star$ will also have the block identity structure.

This scenario is similar to the cell-free case, however there are only $L$ different coefficients for each user.
As a consequence, at most $L$ users can use the same pilot sequence if we want the covariance matrices to be linearly independent (this is required for the asymptotic optimality we show in the next section).
The OBE approach for this special covariance matrix structure was discussed (under a different name) in~\cite{adhikary_uplink_2017,ashikhmin_pilot_2012}.
The authors in~\cite{adhikary_uplink_2017} focus on methods for optimal uplink power allocation with respect to the max-min criterion.

\subsection{Uniform Arrays}

For uniform linear and uniform rectangular arrays, the covariance matrices can be approximately diagonalized by the DFT matrix or a Kronecker-product of DFT matrices.
For uniform linear arrays, this approximation is accurate for larger numbers of antennas (cf. Appendix~\ref{app:channel_model}) and, thus, quite popular in the massive MIMO literature~\cite{adhikary_joint_2013}.
Thanks to the fast Fourier transform (FFT), the complexity of applying the transformation $\vU\he$ to the incoming signals reduces to $O(M \log M)$.
Since the incoming and outgoing signals have to be transformed via the FFT, the processing needs to be centralized in this case.

\section{Asymptotic Analysis}
\label{sec:asymptotic}

With the reformulation in Lemma~\ref{thm:sinr_reformulation} the analysis of the asymptotic behaviour of the OBE is quite straightforward.
For large $M$ and $\vp > 0$, the matrix $\vP\inv/M$ vanishes.
Thus, the asymptotic behaviour of $\gamma_k^\star$ depends critically on the asymptotic behaviour of $\vGamma$, specifically, the rank of $\vGamma$.
Formally we can say:
\begin{thm}\label{thm:asymptotic}
    If $\vp> 0$ and $\limsup_{M\rightarrow \infty} \lVert \vGamma\inv \rVert_2 < \infty$ then $\gstar_k \asymp \gasy_k$ where
    \begin{align}\label{eq:uplink_sinr_asy}
        \gamma_k^\text{asy} = M \frac{p_k}{\ve_k\tp \vGamma\inv \ve_k} \quad\forall k\in\Omega_p.
    \end{align}
    Additionally, we have
     \begin{align}
         \liminf_{M\rightarrow \infty} \gasy_k/M > 0 \quad \forall k \in \Omega_p.
     \end{align}
\end{thm}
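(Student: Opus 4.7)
The plan is to first collapse the reformulated SINR from Lemma~\ref{thm:sinr_reformulation} into a ratio involving only a single scalar quantity, and then estimate that quantity by perturbation from $\vGamma\inv$. Writing $\vGamma = (\vGamma + \tfrac{1}{M}\vP\inv) - \tfrac{1}{M}\vP\inv$ and right-multiplying by $(\vGamma + \tfrac{1}{M}\vP\inv)\inv$, together with the fact that $\vP$ is diagonal so that $\ve_k\tp \vP\inv = \ve_k\tp/p_k$, yields
\begin{align}
    \ve_k\tp \vGamma \Big(\vGamma + \tfrac{1}{M}\vP\inv\Big)\inv \ve_k = 1 - \frac{1}{Mp_k}\, x_M,
\end{align}
with $x_M := \ve_k\tp (\vGamma + \tfrac{1}{M}\vP\inv)\inv \ve_k$. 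Substituting into~\eqref{eq:uplink_sinr_gmf_reform} collapses the ratio to the scalar form
\begin{align}
    \gamma_k^\star = \frac{Mp_k}{x_M} - 1,
\end{align}
so the claim reduces to comparing $x_M$ with $y_M := \ve_k\tp \vGamma\inv \ve_k$, since $\gamma_k^\text{asy} = Mp_k/y_M$.

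Next, I would apply the resolvent identity
\begin{align}
    \vGamma\inv - \Big(\vGamma + \tfrac{1}{M}\vP\inv\Big)\inv = \frac{1}{M}\vGamma\inv \vP\inv \Big(\vGamma + \tfrac{1}{M}\vP\inv\Big)\inv.
\end{align}
Because $\vGamma \succ 0$ and $\vP\inv \succ 0$, Loewner monotonicity gives $0 \prec (\vGamma + \tfrac{1}{M}\vP\inv)\inv \preceq \vGamma\inv$, so $x_M \leq y_M$ and $\|(\vGamma + \tfrac{1}{M}\vP\inv)\inv\|_2 \leq \|\vGamma\inv\|_2$. Sandwiching the identity between $\ve_k\tp$ and $\ve_k$ and applying Cauchy--Schwarz together with the bounds $\|\vGamma\inv \ve_k\|^2 \leq \|\vGamma\inv\|_2\, y_M$ and the analogous estimate for the other factor yields
\begin{align}
    0 \leq y_M - x_M \leq \frac{\|\vP\inv\|_2\,\|\vGamma\inv\|_2}{M}\, y_M.
\end{align}
Since $\|\vP\inv\|_2 = 1/\min_k p_k$ is fixed and $\limsup_{M\to\infty}\|\vGamma\inv\|_2 < \infty$ by hypothesis, the prefactor vanishes, so $x_M/y_M \to 1$.

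Combining the two steps,
\begin{align}
    \frac{\gamma_k^\star}{\gamma_k^\text{asy}} = \frac{y_M}{x_M} - \frac{y_M}{Mp_k} \longrightarrow 1 - 0 = 1,
\end{align}
where the residual term vanishes because $y_M \leq \|\vGamma\inv\|_2$ is bounded in limsup; this establishes $\gamma_k^\star \asymp \gamma_k^\text{asy}$. The second assertion follows directly from the same upper bound: $\gamma_k^\text{asy}/M = p_k/y_M \geq p_k/\|\vGamma\inv\|_2$, so $\liminf_M \gamma_k^\text{asy}/M \geq p_k/\limsup_M\|\vGamma\inv\|_2 > 0$. The main subtlety is the ratio convergence $x_M/y_M \to 1$: the hypothesis supplies no a priori lower bound on $y_M$, so one cannot conclude from an absolute bound $y_M - x_M = O(1/M)$ alone, and the relative perturbation estimate displayed above is essential.
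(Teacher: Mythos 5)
Your proof is correct and follows essentially the same route as the paper, which simply takes the limit of the reformulated SINR in~\eqref{eq:uplink_sinr_gmf_reform} from Lemma~\ref{thm:sinr_reformulation}; your scalar collapse $\gamma_k^\star = Mp_k/x_M - 1$ is in fact the intermediate form already appearing in~\eqref{eq:alt_ul_sinr}. You merely make rigorous, via the resolvent identity and the relative (rather than absolute) perturbation bound, the limit step the paper declares ``straightforward,'' and your closing remark correctly identifies why the relative estimate is the right one when no lower bound on $\ve_k\tp\vGamma\inv\ve_k$ is assumed.
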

\begin{proof}
    Follows straightforward from Lemma~\ref{thm:sinr_reformulation} by taking the limit of the SINR in~\eqref{eq:uplink_sinr_gmf_reform} for $M\rightarrow \infty$.
\end{proof}

Now the question is whether we can find a set of intuitive conditions on the channel covariance matrices $\vC_k$ that guarantee the assumption $\limsup_{M\rightarrow \infty} \lVert \vGamma\inv\rVert_2 < \infty$.
One necessary condition is that the captured energy grows linearly with the number of antennas.
\begin{condprime}\label{cond:growing_trace}
    \begin{align}
        \liminf_{M\rightarrow\infty} \tr(\vC_k)/M > 0, \quad \forall k.
    \end{align}
\end{condprime}
Without this condition, the diagonal elements of $\vGamma$ might vanish, which would violate the assumption in Theorem~\ref{thm:asymptotic}.
Commonly used channel models that ignore antenna coupling fulfill Condition~\ref{cond:growing_trace}.
If antenna coupling is taken into account, Condition~\ref{cond:growing_trace} can only be achieved with an array aperture that grows linearly with $M$~\cite{ivrlac_physical_2016}.

Another necessary condition is that the columns of $\vXi$ -- which correspond to the covariance matrices $\vC_k$ of users with common pilot sequence -- are asymptotically linearly independent.
This can be written as~(cf.~\cite{bjornson_massive_2017})
\begin{cond}\label{cond:linear_independence}
    \begin{align}\label{eq:linear_independence1}
        \liminf_{M\rightarrow\infty} \min_{\vlambda: \lVert \vlambda \rVert_2 = 1} \frac{1}{M} \bigg\lVert \sum_{k\in\Omega_p} \lambda_k \vC_k \bigg\rVert_F^2 > 0
    \end{align}
    or, equivalently,
    \begin{align}\label{eq:linear_independence2}
        \limsup_{M\rightarrow\infty} \lVert (\vXi\he\vXi/M)\inv\rVert_2 < \infty.
    \end{align}
\end{cond}
The equivalence of~\eqref{eq:linear_independence1} and~\eqref{eq:linear_independence2} follows from 
\begin{align}
    \min_{\vlambda: \lVert \vlambda \rVert_2 = 1} \frac{1}{M} \bigg\lVert \sum_{k\in\Omega_p} \lambda_k \vC_k \bigg\rVert_F^2 &= \min_{\vlambda:\lVert \vlambda \rVert_2 = 1}\vlambda\he (\vXi\he \vXi/M) \vlambda \notag \\
                                                                                                                              &= \sigma_{K_p}(\vXi\he\vXi/M) \notag \\
                                                                                                                              &= \frac{1}{\norm{(\vXi\he\vXi/M)\inv}_2}
\end{align}
where $\sigma_{K_p}(\vXi\he\vXi/M)$ is the smallest singular value of $\vXi\he\vXi/M$.
Condition~\ref{cond:linear_independence} actually implies Condition~\ref{cond:growing_trace}.
In Appendix~\ref{app:channel_model}, we show that for a uniform linear array (ULA), Condition~\ref{cond:linear_independence} is equivalent to having linearly independent angular power densities.
In practical channel models (e.g.~\cite{3gpp_spatial_2014}), if users are at slightly different positions in the network, their angular power densities are linearly independent.
Thus, we believe that Condition~\ref{cond:linear_independence} is not very restrictive.

We need an additional condition to deal with the weighting matrix $\vQ\tpi \otimes \vZ\inv$ in $\vGamma = \vXi\he (\vQ\tpi\otimes \vZ\inv)\vXi$.
Since both, $\vQ$ and $\vZ$, contain sums of covariance matrices, one possible condition is
\begin{cond}\label{cond:bounded_norm}
    \begin{align}
        \limsup_{M\rightarrow\infty} \lVert \vC_k \rVert_2 < \infty, \quad \forall k
    \end{align}
\end{cond}
This condition basically forces the energy in the channel vector to be spread in many spatial directions as $M$ grows.
For example, line-of-sight channel models with rank-one covariance matrices do not satisfy this condition.
With these two conditions, we can show the following lemma, which then implies Theorem~\ref{thm:asymptotic}.
\begin{lemma}\label{lemma:general_conditions}
    If the covariance matrices fulfill Conditions~\ref{cond:linear_independence} and~\ref{cond:bounded_norm}, we have $\limsup \lVert \vGamma\inv\rVert_2 < \infty$.
\end{lemma}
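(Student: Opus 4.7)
My plan is to exploit the Kronecker structure of $\vGamma = \vXi\he (\vQ^{-\top}\otimes \vZ^{-1})\vXi/M$ and reduce bounding $\lVert \vGamma^{-1}\rVert_2$ to two separate estimates: an upper bound on the weighting matrix and the lower bound on $\vXi\he\vXi/M$ that is exactly Condition~\ref{cond:linear_independence}. Since $\lVert \vGamma^{-1}\rVert_2 = 1/\sigma_{\min}(\vGamma)$, it suffices to bound $\sigma_{\min}(\vGamma)$ from below by a positive constant for all large $M$.

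The first step is the variational chain
\begin{align}
\sigma_{\min}(\vGamma) &= \min_{\lVert\vlambda\rVert=1} \vlambda\he\vXi\he(\vQ^{-\top}\otimes \vZ\inv)\vXi\vlambda/M \notag\\
&\ge \sigma_{\min}\bigl(\vQ^{-\top}\otimes \vZ\inv\bigr)\cdot\min_{\lVert\vlambda\rVert=1} \lVert\vXi\vlambda\rVert_2^2/M.
\end{align}
The second factor is exactly $\sigma_{\min}(\vXi\he\vXi/M)$, whose reciprocal is bounded above by Condition~\ref{cond:linear_independence}, so this factor is bounded away from zero for large $M$. For the Kronecker factor I would use the standard identity $\sigma_{\min}(\vQ^{-\top}\otimes \vZ\inv) = \sigma_{\min}(\vQ\inv)\,\sigma_{\min}(\vZ\inv) = 1/(\lVert \vQ\rVert_2\lVert \vZ\rVert_2)$.

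The remaining task is thus to bound $\lVert \vZ\rVert_2$ and $\lVert \vQ\rVert_2$ from above. Using the triangle inequality on the defining sums,
\begin{align}
\lVert\vZ\rVert_2 \le 1 + \sum_n p_n \lVert\vC_n\rVert_2, \qquad
\lVert\vQ\rVert_2 \le \frac{1}{\ptr} + \sum_{k\in\Omega_p} \lVert\vC_k\rVert_2,
\end{align}
and each $\lVert\vC_k\rVert_2$ is bounded in $M$ by Condition~\ref{cond:bounded_norm}. Since $K$ is held fixed as $M$ grows, these sums stay bounded, so $\limsup_{M\to\infty}\lVert\vZ\rVert_2 < \infty$ and $\limsup_{M\to\infty}\lVert\vQ\rVert_2 < \infty$.

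Combining the three bounds yields $\liminf_{M\to\infty}\sigma_{\min}(\vGamma) > 0$, which is what we need. The only place that requires care, and what I expect to be the one potential pitfall, is verifying the Kronecker singular-value identity and the direction of the inequality for the weighted Gram form (one must note that $\vQ^{-\top}\otimes\vZ\inv$ is Hermitian positive definite so that $\vlambda\he\vXi\he W\vXi\vlambda \ge \sigma_{\min}(W)\lVert\vXi\vlambda\rVert_2^2$ holds as a genuine lower bound rather than only an ordering of operators); everything else is a direct application of Conditions~\ref{cond:linear_independence} and~\ref{cond:bounded_norm}.
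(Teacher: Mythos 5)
Your proof is correct and follows essentially the same route as the paper: the paper's one-line argument is precisely the bound $\lVert \vGamma\inv\rVert_2 \leq \lVert\vZ\rVert_2\lVert\vQ\rVert_2\lVert(\vXi\he\vXi/M)\inv\rVert_2$, which you derive in detail via the variational characterization of $\sigma_{\min}$ and the Kronecker singular-value identity, together with the same triangle-inequality bounds on $\lVert\vZ\rVert_2$ and $\lVert\vQ\rVert_2$ from Condition~\ref{cond:bounded_norm}. Your version simply makes explicit the steps the paper leaves implicit.
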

\begin{proof}
    Since the covariance matrices have bounded spectral norm by Condition~\ref{cond:bounded_norm}, we have $\limsup_M \lVert \vZ \rVert_2 < \infty$ and $\limsup_M \lVert \vQ\rVert_2 < \infty$ and thus
\begin{align}
    \limsup_M \lVert \vGamma\inv\rVert_2 \leq \limsup_M \lVert \vZ \rVert_2 \lVert \vQ \rVert_2 \lVert (\vXi\he\vXi/M)\inv \rVert_2 < \infty
\end{align}
due to Condition~\ref{cond:linear_independence}.
\end{proof}

From the result of Lemma~\ref{lemma:general_conditions} we conclude that the SINR of all users grows linearly with $M$ under very general conditions on the channel covariance matrices.
The same observation was made in~\cite{bjornson_massive_2017} for the massive MIMO uplink when a more complex linear MMSE filter is applied.
Note that for the very popular far field channel model with a uniform linear array at the base station (cf. Appendix~\ref{app:channel_model}), Condition~\ref{cond:bounded_norm} is violated.
However, as we also discuss in Appendix~\ref{app:channel_model}, the bounded-norm condition is not necessary for this specific channel model and might also be too restrictive in other cases.

The slope of the SINR depends on the matrix $\vGamma$, which is clearly negatively affected by the correlations between the covariance matrices.
The correlation of the covariance matrices of users that employ the same pilot sequences can be reduced by proper allocation of pilots to users~(cf.~\cite{yin_coordinated_2013}).

If we use an LMMSE receive filter, we get a very similar result for the asymptotically equivalent SINR (cf.~Appendix~\ref{app:lmmse}).
The only difference is that the channel covariance matrices in $\vZ$ are replaced by the covariance matrices of the estimation errors when we apply an MMSE channel estimator.
As a result, we get a matrix that is smaller in terms of positive definiteness and thus a higher slope for the SINR.
Maybe not surprisingly, the OBE and the LMMSE filter exhibit similar performance at low SNR.
An in-depth discussion can be found in Appendix~\ref{app:lmmse}.

\section{Partial Covariance Matrix Information}

In the Section~\ref{sec:structure}, we discussed how common structure of the channel covariance matrices helps to reduce the complexity of calculating the OBE.
We can turn this around and assume a certain structure to reduce complexity, even if the actual covariance matrices do not exactly match that assumption.
An equivalent analysis was performed in~\cite{bjornson_massive_2017} for the LMMSE filter.
For example, we can just use the diagonals $\hat\vc_k$ of the covariance matrices $\vC_k$ (possibly after transformation to a different basis) to design the transformations $\vA_k$. 
We can show that, if the diagonal matrices $\widehat\vC_k = \diag(\hat\vc_k)$ fulfill Conditions~\ref{cond:linear_independence} and~\ref{cond:bounded_norm}, we still get a linear scaling of the SINR with the number of antennas, even if the actual covariance matrices $\vC_k$ are not diagonal.

To see this, let us define the matrix $\widehat \vXi= [\hat \vc_1, \ldots, \hat \vc_{K_p}] \in \R^{M\times K_p}$.
If we assume that the covariance matrices are diagonal, the optimal transformation is diagonal as well, with the diagonal
\begin{align}
    \hat\va_k = \widehat\vZ\inv\widehat\vQ\inv \widehat \vXi ( \vP\inv + \widehat\vXi\tp \widehat\vZ\inv\widehat\vQ\inv \widehat \vXi )\inv \ve_k.
\end{align}
However, due to the mismatch between assumption and reality, $\vA_k = \diag(\hat\va_k)$ is not the optimal transformation. 
Consequently, we can not use the SINR in~\eqref{eq:uplink_sinr_gmf_reform} to evaluate the performance.

Since the transformation is not optimal anyway, it does not affect the analysis if we replace the diagonal matrices $\widehat\vZ\inv\widehat\vQ\inv$ and $\vP\inv$. 
That is, we can use the transformation $\vA_k = \diag(\hat \va_k)$ with
\begin{align}\label{eq:diagonal_trafo}
    \hat\va_k = \vD \widehat \vXi ( \vR + \widehat\vXi\tp \vD \widehat \vXi )\inv \ve_k
\end{align}
for some diagonal matrix $\vD \in \R^{M\times M}$ and a positive definite regularization matrix $\vR\in \R^{K_p \times K_p}$.

We state the following Theorem.
\begin{thm}\label{thm:asymptotic2}
    Suppose we calculate diagonal transformations $\vA_k$, with the diagonal as in~\eqref{eq:diagonal_trafo}, only based on the diagonal elements $\hat \vc_k$ of the covariance matrices. 
    If the matrices $\widehat \vC_k = \diag(\hat \vc_k)$ fulfill Conditions~\ref{cond:linear_independence} and~\ref{cond:bounded_norm} and the diagonal matrix $\vD$ fulfills $\liminf \lVert \vD \rVert_2 > 0$ and $\limsup \lVert \vD\inv \rVert_2<\infty$ we get
    \begin{align}
        \liminf_{M\rightarrow \infty} \gul_k/M > 0.
    \end{align}
\end{thm}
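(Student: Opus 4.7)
My plan is to evaluate the SINR expression in~\eqref{eq:uplink_sinr_gmf} at the suboptimal diagonal filter $\vA_k = \diag(\hat\va_k)$ and show that the numerator stays bounded below while the denominator decays as $1/M$. To set up, let $\vT = \widehat\vXi\tp \vD \widehat\vXi \in \R^{K_p\times K_p}$, so that $\hat\va_k = \vD \widehat\vXi (\vR + \vT)\inv \ve_k$. Condition~\ref{cond:linear_independence} on $\widehat\vC_k$, combined with $\vD$ being spectrally bounded above and below, gives $\liminf \lambda_{\min}(\vT/M) > 0$ and $\limsup \lambda_{\max}(\vT/M) < \infty$; consequently $\lVert (\vR + \vT)\inv \rVert_2 = O(1/M)$.

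For the signal and pilot-contamination parts I would use the identity $\vT(\vR+\vT)\inv = \id - \vR(\vR+\vT)\inv$. Since $\tr(\vC_n \vA_k) = \hat\vc_n\tp \hat\va_k = \ve_n\tp \vT(\vR+\vT)\inv \ve_k$, this yields $\tr(\vC_k \vA_k) = 1 - O(1/M)$, so the numerator $p_k |\tr(\vC_k\vA_k)|^2$ is bounded away from zero, while $\tr(\vC_n \vA_k) = O(1/M)$ for $n \neq k$, so the pilot-contamination sum contributes only $O(1/M^2)$ to the denominator. The squared norm $\lVert \hat\va_k \rVert_2^2$ comes out as $O(1/M)$: Condition~\ref{cond:bounded_norm} bounds the diagonal entries of the $\widehat\vC_k$ and hence $\lVert \widehat\vXi \rVert_F^2 = O(M)$, which together with the boundedness of $\vD$ and the $1/M$-decay of $(\vR+\vT)\inv$ gives the claim.

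The crux is the noise/interference trace $\tr(\vZ \vA_k \vQ \vA_k\he)$. For diagonal $\vA_k$ it equals the quadratic form $\hat\va_k\he (\vZ \odot \bar \vQ) \hat\va_k$, whose kernel is the Schur product of two positive-semidefinite matrices and hence itself PSD. I would then apply the classical Hadamard bound $\lVert \vZ \odot \bar\vQ \rVert_2 \leq \lVert \vZ \rVert_2 \cdot \max_i Q_{ii}$; the second factor is bounded because Condition~\ref{cond:bounded_norm} restricted to the diagonals controls $\max_i Q_{ii}$, and the first factor is bounded provided Condition~\ref{cond:bounded_norm} also holds for the full matrices $\vC_n$. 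Multiplying by $\lVert \hat\va_k \rVert_2^2 = O(1/M)$ yields $\tr(\vZ \vA_k \vQ \vA_k\he) = O(1/M)$, so the denominator is $O(1/M)$ and $\liminf \gamma_k\ul/M > 0$ follows.

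The main obstacle is precisely this last step: unlike Lemma~\ref{lemma:general_conditions}, where the filter is matched to the full covariance structure and the analysis reduces to controlling $\lVert \vGamma\inv \rVert_2$, here the diagonal filter $\vA_k$ and the full, generally non-diagonal $\vZ,\vQ$ do not decouple, and one is forced to invoke Schur/Hadamard inequalities that bring in spectral norms of the true covariance matrices. I therefore expect the proof to tacitly use Condition~\ref{cond:bounded_norm} on the actual $\vC_n$, not merely on their diagonals as the statement literally reads.
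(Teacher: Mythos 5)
Your proof is correct and follows essentially the same route as the paper's: the same control of $(\vR+\widehat\vXi\tp\vD\widehat\vXi)\inv$ via Condition~\ref{cond:linear_independence}, the same resolvent identity for the signal and pilot-contamination terms, and the same Hadamard-product treatment of $\tr(\vZ\vA_k\vQ\vA_k\he)$ (the paper merely rescales $\hat\va_k$ by $\sqrt{M}$ so that every quantity is $O(1)$ rather than tracking $O(1/M)$ rates in the denominator). Your closing suspicion is also well founded: the paper's proof invokes ``the bounded spectral norm of the covariance matrices (Condition~\ref{cond:bounded_norm})'' at exactly that step, and a bound on $\lVert\vZ\rVert_2$ --- hence on the full $\vC_n$, not merely their diagonals --- is genuinely unavoidable there (bounded diagonals alone do not control the quadratic form), so Condition~\ref{cond:bounded_norm} is tacitly assumed for the actual covariance matrices even though the theorem statement only imposes it on the $\widehat\vC_k$.
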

\begin{proof}
    Since a scaling of the transformation does not affect the SINR, we will use 
    \begin{align}\label{eq:scaled_diagonal_trafo}
        \hat\va_k = \frac{1}{\sqrt{M}} \vD \widehat \vXi \left( \frac{1}{M}\vR + \frac{1}{M}\widehat\vXi\tp \vD \widehat \vXi \right)\inv \ve_k
    \end{align}
    in the following.
    The matrix $\vR$ does not grow with $M$ and is considered constant.
    We have 
    \begin{align}\label{eq:asymp_diag}
        &\limsup_{M\rightarrow \infty} \lVert (\widehat \vXi\tp \vD \widehat\vXi)\inv /M \rVert_2 \notag \\
        &\leq \limsup_{M\rightarrow \infty} \lVert \vD\inv \rVert_2 \lVert (\widehat\vXi\tp\widehat\vXi/M)\inv\rVert_2 \\
        &< \infty \notag
    \end{align}
    due to the conditions on $\vD$ and the asymptotic linear independence from Condition~\ref{cond:linear_independence}.

    If we incorporate the diagonal transformation $\vA_k = \diag(\hat\va_k)$ into the uplink SINR in~\eqref{eq:uplink_sinr_gmf}, we get
    \begin{equation}
        \gamma_k\ul/M = \frac{p_k\abs{\hat\vc_k\tp \hat\va_k/\sqrt{M}}^2}{ \hat\va_k\tp (\vQ\tp \odot \vZ) \va_k + \sum_{n \in \set I_k} p_n \abs{ \hat\vc_n\hat\va_k }^2 }
    \end{equation}
    where $\odot$ denotes the element-wise or Hadamard product.
    Note that $\vQ$ and $\vZ$ are not diagonal, since the covariance matrices $\vC_k$ are not actually diagonal, but still $\tr(\vC_k \vA_n) = \hat\vc_k\tp \hat\va_n$.

    Again, due to the conditions on $\vD$ and the bounded spectral norm of the covariance matrices (Condition~\ref{cond:bounded_norm}), it is straightforward to verify that 
    \begin{align}
        \limsup_{M\rightarrow \infty} \hat\va_k\tp (\vQ\tp \odot \vZ) \va_k < \infty.
    \end{align}
    Additionally, we have
    \begin{align}
        &\lim_{M\rightarrow \infty} \hat\vc_k\tp \hat\va_k/\sqrt{M} \\
        &= \lim_{M\rightarrow\infty} \ve_k\tp \widehat\vXi\vD\widehat\vXi/M (\vR/M + \widehat\vXi\vD\widehat\vXi/M)\inv \ve_k \\
        &= 1
    \end{align}
    due to~\eqref{eq:asymp_diag}.

    In the following equations we use $\vGamma_\vD = \widehat\vXi \vD \widehat\vXi/M$ for notational convenience.
    For the interference caused by pilot contamination we get for $n\in \set I_k$
    \begin{align}
        &\lim_{M\rightarrow \infty} \hat\vc_k\tp \hat\va_n \\
        =&\lim_{M\rightarrow\infty} \ve_k\tp \sqrt{M} \vGamma_\vD (\vR/M + \vGamma_\vD)\inv \ve_n \\
        =&\lim_{M\rightarrow\infty} \ve_k\tp \left( \sqrt{M}\vGamma_\vD (\vR/M + \vGamma_\vD)\inv - \sqrt{M} \id \right) \ve_n \\
        =&\lim_{M\rightarrow\infty} \ve_k\tp (\sqrt{M}\vGamma_\vD - \vR/\sqrt{M} - \sqrt{M}\vGamma_\vD )(\vR/M + \vGamma_\vD)\inv \ve_n \notag\\ 
        =&\lim_{M\rightarrow\infty} -\frac{1}{\sqrt{M}}\ve_k\tp  \vR (\vR/M + \vGamma_\vD)\inv \ve_n \\ 
        =& \;0
    \end{align}
    due to~\eqref{eq:asymp_diag}.
    In the end we get the asymptotic equivalence
    \begin{align}
        \gamma_k\ul \asymp M\frac{p_k}{ \ve_k\tp \vGamma_\vD\inv \widehat\vXi\tp \vD (\vZ\odot \vQ\tp) \vD \widehat\vXi \vGamma_\vD\inv \ve_k\tp }
    \end{align}
    and thus 
    \begin{align}
        \liminf_{M\rightarrow \infty} \gul_k/M > 0.
    \end{align}
\end{proof}
As mentioned above, an equivalent result was shown in~\cite{bjornson_massive_2017} for the LMMSE filter.
The result for the LMMSE filter can be proven with the same techniques using a generalized version of the law of large numbers (cf.~Appendix~\ref{app:lmmse}).

Note that we can transfer the result in Theorem~\ref{thm:asymptotic2} to the multi-cell downlink in a straightforward manner.
In contrast to~\cite{bjornson_massive_2017}, we do not need a different bound for the downlink, since we already employ the looser bound in the uplink where the decoder does not use any instantaneous CSI.

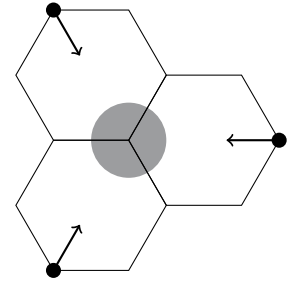
\begin{figure}[h]
    \centering
    \begin{tikzpicture}[]
        \fill[TUMMediumGray] circle(0.5);
        \draw[] (1,0) +(0:1) -- +(60:1) -- +(120:1) -- +(180:1) -- +(240:1) -- +(300:1) -- cycle;
        \draw[] (120:1) +(0:1) -- +(60:1) -- +(120:1) -- +(180:1) -- +(240:1) -- +(300:1) -- cycle;
        \draw[] (240:1) +(0:1) -- +(60:1) -- +(120:1) -- +(180:1) -- +(240:1) -- +(300:1) -- cycle;
        \fill (0:2) circle(0.1);
        \fill (120:2) circle(0.1);
        \fill (240:2) circle(0.1);
        \draw[->,thick] (0:2) -- (0:1.3);
        \draw[->,thick] (120:2) -- (120:1.3);
        \draw[->,thick] (240:2) -- (240:1.3);
    \end{tikzpicture}
    \caption{\label{fig:network}
        Small network with three hexagonal cells.
        The base stations are positioned at the corners and the users are uniformly distributed in the shaded circular area in the center.
    }
\end{figure}

\section{Results}
\label{sec:results}

In this section, we demonstrate the performance of the proposed OBE method in a multi-cell uplink scenario.
We consider a small network with three cells in a similar setup as in~\cite{bjornson_massive_2017}.
The base stations are positioned at the corners of the cells pointing towards the center of the network as depicted in Fig.~\ref{fig:network}.
The users are uniformly distributed in the circular area in the center of the network.
We analyze the performance in the uplink, when all users transmit with the same power.
We have a full pilot-reuse in this scenario, i.e., the number of channel accesses used for pilots coincides with the number of users per cell.

We use the spatial channel model from the 3GPP report in~\cite{3gpp_spatial_2014} to generate the covariance matrices.
Specifically we simulate the micro-cell scenario with a cell diameter of 500m.
We normalize the channel covariance matrices such that the SNR figure $\rho\ul$ denotes the per-antenna SNR of a user exactly in the center of the network in Fig.~\ref{fig:network}.
Since the noise variance is set to one, this means that for a user $k$ in the center we have $p_k\tr(\vC_k)/M = 1$.

To analyze the performance in the uplink, we use the bound on the spectral efficiency with the SINR in~\eqref{eq:uplink_sinr_cond}.
That is, we evaluate the expectation $\expec[\log_2(1+\tilde\gul_k)]$ for each user with Monte-Carlo simulations.
To demonstrate the asymptotic properties of our approach, we show the average spectral efficiency of the worst user in the cell with respect to the number of antennas in Fig.~\ref{fig:uplink_wrtM}.
Additional to the OBE, we show results for the LMMSE filter, the zero-forcing filter based on the MMSE channel estimates, and the matched filter based on the MMSE channel estimates.
As discussed in Section~\ref{sec:structure} and Appendix~\ref{app:channel_model}, in our scenario with a ULA at the base station, the covariance matrices are approximately diagonalized by the DFT matrix.
For the OBE and the LMMSE filter we also show results for the case where only the diagonals of the transformed channel covariance matrices are available (OBE-D and LMMSE-D in the legend).

We observe the expected behaviour.
The spectral efficiency saturates when using the simple MF based on the MMSE estimates. 
For all other methods it grows without bound.

The loss in performance for partial covariance matrix information is negligible.
Thus, in our scenario, there is no reason not to use the low-complexity variants that only use the diagonals of the covariance matrices.

\begin{figure}[t]
\begin{center}
        \begin{tikzpicture}[trim axis left,trim axis right]
\begin{semilogxaxis}[
    height=0.75\columnwidth,
    width=0.95\columnwidth,
    grid=both,
    ticks=both,
    xlabel={\small Number of antennas},
    ylabel={\small $\min_k r_k$},
    title={},
    ymin=0,
    ymax=4,
    xmin=10,
    xmax=1000,
    yticklabel style={/pgf/number format/.cd, fixed, fixed zerofill, precision=1},
    legend entries={
        \tiny LMMSE filter,
        \tiny LMMSE-D,
        \tiny OBE,
        \tiny OBE-D,
        \tiny MMSE zero-forcing,
        \tiny MMSE matched filter,
    },
    legend style={legend columns = 3, at={(0.00,1.05)}, anchor=south west},
]

\addplot[lmmse] table [x index=0,y index=6,col sep=comma] {figures/uplink_wrtM.csv};
\addplot[lmmsed] table [x index=0,y index=4,col sep=comma] {figures/uplink_wrtM.csv};
\addplot[rmf] table [x index=0,y index=5,col sep=comma] {figures/uplink_wrtM.csv};
\addplot[rmfd] table [x index=0,y index=3,col sep=comma] {figures/uplink_wrtM.csv};
\addplot[mmsezf] table [x index=0,y index=2,col sep=comma] {figures/uplink_wrtM.csv};
\addplot[mmsemf] table [x index=0,y index=1,col sep=comma] {figures/uplink_wrtM.csv};

\end{semilogxaxis}
\end{tikzpicture} 
\end{center}
    \caption{
        Average spectral efficiency of the worst user in the cell.
        Results are for the multi-cell scenario as depicted in Fig.~\ref{fig:network}, with $5$ users per cell and the same number of orthogonal training sequences.
        The cell-edge SNR is $-6$dB.
}
    \label{fig:uplink_wrtM}
\end{figure}
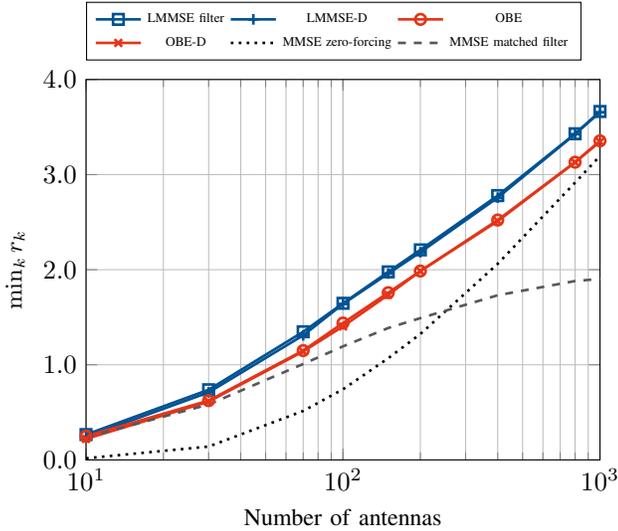

In Fig.~\ref{fig:uplink_wrtP}, we show the spectral efficiency with respect to the cell edge SNR.
We see that the gap between the LMMSE filter and the OBE vanishes for low SNR.
For high SNR, all methods saturate to different values.
Interestingly, zero-forcing saturates to a lower value than the proposed OBE.
The saturation of the spectral efficiency for the LMMSE and the zero-forcing filter is clearly due to pilot-contamination.
So while pilot contamination does not lead to an upper limit of the SINR with respect to the number of antennas, it does limit the SINR with respect to the transmit power.
Of course, there are other issues in practice, such as outdated CSI, that have a similar effect.

We see that at very high SNR the performance drops if we have only partial covariance matrix information.
We conclude that for lower noise variance and thus smaller regularization terms in the filter calculation, the inaccuracies in the covariance matrix information lead to increased interference.
In practice, this is not an issue, since the desired operation point of a massive MIMO system is at lower SNR.

\begin{figure}[t]
\begin{center}
        \begin{tikzpicture}[trim axis left,trim axis right]
\begin{axis}[
    height=0.75\columnwidth,
    width=0.95\columnwidth,
    grid=both,
    ticks=both,
    xlabel={\small $\pul$ in dB},
    ylabel={\small $\min_k r_k$},
    title={},
    ymin=0,
    ymax=4,
    xmin=-30,
    xmax=20,
    yticklabel style={/pgf/number format/.cd, fixed, fixed zerofill, precision=1},
    legend entries = {
        \tiny LMMSE filter,
        \tiny LMMSE-D,
        \tiny OBE,
        \tiny OBE-D,
        \tiny MMSE zero-forcing,
        \tiny MMSE matched filter,
    },
    legend style={legend columns = 3, at={(0.00,1.05)}, anchor=south west},
]

\addplot[lmmse] table [x index=0,y index=6,col sep=comma] {figures/uplink_wrtP.csv};
\addplot[lmmsed] table [x index=0,y index=4,col sep=comma] {figures/uplink_wrtP.csv};
\addplot[rmf] table [x index=0,y index=5,col sep=comma] {figures/uplink_wrtP.csv};
\addplot[rmfd] table [x index=0,y index=3,col sep=comma] {figures/uplink_wrtP.csv};
\addplot[mmsezf] table [x index=0,y index=2,col sep=comma] {figures/uplink_wrtP.csv};
\addplot[mmsemf] table [x index=0,y index=1,col sep=comma] {figures/uplink_wrtP.csv};

\end{axis}
\end{tikzpicture} 
\end{center}
    \caption{
        Average spectral efficiency of the worst user in the cell.
        Results are for the multi-cell scenario as depicted in Fig.~\ref{fig:network}, with $5$ users per cell and the same number of orthogonal training sequences.
        The number of antennas at the base station is $M=200$.
}
    \label{fig:uplink_wrtP}
\end{figure}
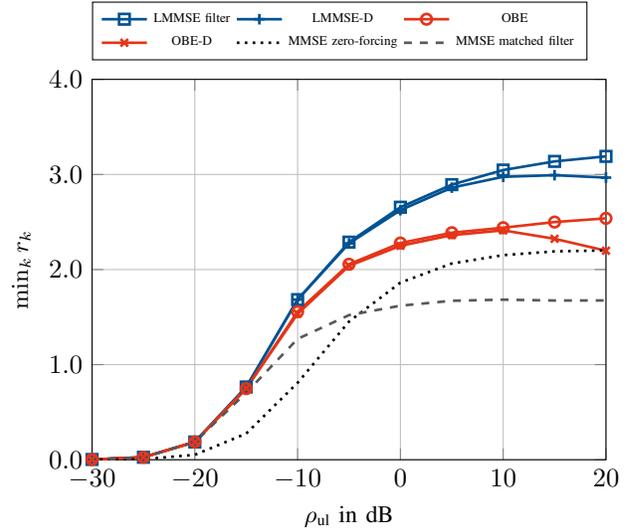

\section{Conclusion}

We presented a novel approach for the design of low-complexity linear receive filters with imperfect channel state information.
We get low complexity, since the receive filters themselves are linear in the observation obtained from a training phase.
Under mild assumptions on the covariance matrices, these bilinear equalizers (BE) yield an unbounded SINR for large numbers of antennas even in the presence of pilot-contamination.
Our simulation results indicate that in certain cellular scenarios, the BE approach is competitive with more complex methods such as the LMMSE filter.

In this work we focused on the optimal BE (OBE) design and its performance analysis.
We could show results analogously to those provided in~\cite{bjornson_massive_2017} for the LMMSE filter.
We want to note again that, due to the similarity of our SINR expressions to those for perfect CSI, we can apply state-of-the-art methods on downlink precoder design using uplink-downlink duality, resource allocation, QoS optimization and more.
The difference is that, since we optimize transformations that only depend on channel statistics, all of the resource allocation optimization problems also work on channel statistics and there is no need to perform complex optimizations in each channel coherence interval.

For these reasons, the OBE design is well suited for massive MIMO setups that have to deal with high path-loss, but also for the cell-free massive MIMO scenario, where we want to limit the amount of backhaul traffic.
Possibly, the design can also be adapted to the hybrid analog-digital transceiver structures discussed for millimeter wave bands.

\appendix

\subsection{SINR Reformulation}
\label{app:sinr_reformulation}

\begin{figure*}[t]
\begin{align}\label{eq:circulant_asy}
    [\vGamma_\text{ULA}]_{nk} &= \frac{1}{M}\tr \bigg(\tilde \vC(f_n) \Big(\sum_i p_i \tilde \vC(f_i) + \id \Big)\inv \tilde \vC(f_k) \Big( \sum_{i\in\Omega_p} \tilde \vC(f_i) + \frac{1}{\ptr} \id \Big)\inv \bigg) \notag \\
    &= \frac{1}{M} \sum_{m=0}^{M-1} \frac{f_n(2\pi m/M) f_k(2\pi m/M)}{ (\sum_i p_i f_i(2\pi m/M) + 1)( \sum_{i\in\Omega_p} f_i(2\pi m/M) + 1/\ptr) }.
\end{align}
\hrule
\end{figure*}

With the definitions before Lemma~\ref{thm:sinr_reformulation} we get
\begin{align}
    \va_k^\star &= \left( \vQ\tp\otimes\vZ + \vXi\vP\vXi\he - p_k \vc_k\vc_k\he \right) \inv \vc_k \\
                &= \left( \vUpsilon - p_k \vc_k \vc_k\he \right)\inv \vc_k.
\end{align}
where $\vUpsilon = \vQ\tp\otimes \vZ + \vXi\vP\vXi\he$.
We apply the matrix inversion lemma resulting in
\begin{align}
    \va_k^\star &= \vUpsilon\inv \vc_k - \vUpsilon\inv \vc_k ( \vc_k\he \vUpsilon\inv \vc_k - p_k\inv )\inv \vc_k\he\vUpsilon\inv\vc_k\\
          &= \vUpsilon\inv \vc_k \frac{1}{1 - p_k\vc_k\he \vUpsilon\inv \vc_k}.
\end{align}
Further, we have
\begin{align}\label{eq:opt_filt_low_comp}
    \tilde \va_k^\star &= \vUpsilon\inv \vc_k = \left(\vQ\tp\otimes\vZ + \vXi\vP\vXi\he \right)\inv\vXi \ve_k \notag \\
                       &= \frac{1}{p_k} (\vQ\tpi\otimes\vZ\inv) \vXi \left( \vXi\he(\vQ\tpi\otimes\vZ\inv)\vXi + \vP\inv \right)\inv \ve_k
\end{align}
leading to the SINR
\begin{align}\label{eq:alt_ul_sinr}
    \gamma_k\ul &= p_k \vc_k\he\va_k^\star = \frac{p_k \vc_k\he\tilde\va_k}{1- p_k\vc_k\he\tilde\va_k} \notag \\
        & =\frac{\ve_k\tp \vGamma (\frac{1}{M}\vP\inv + \vGamma)\inv \ve_k }{1 - \ve_k\tp \vGamma(\frac{1}{M}\vP\inv + \vGamma)\inv \ve_k} \notag\\
        & = M p_k \frac{\ve_k\tp \vGamma (\frac{1}{M}\vP\inv + \vGamma)\inv \ve_k }{\ve_k\tp (\frac{1}{M}\vP\inv + \vGamma)\inv \ve_k}.
\end{align}
Note that, with
\begin{align} 
    &[\tilde \va_1, \ldots \tilde \va_{K_p}] \notag  \\
    &= ( \vQ\tp\otimes \vZ  + \vXi \vP \vXi\he )\inv \vXi \notag \\
    &= (\vQ\tpi \otimes \vZ\inv) \vXi ( \vP\inv + \vXi\he(\vQ\tpi\otimes\vZ\inv)\vXi)\inv \vP\inv
\end{align}
we can calculate the scaled transformations of users using the same pilot sequence simultaneously.

\subsection{Multi-path Physical Channel Model}
\label{app:channel_model}

Most physical channel models for performance evaluation of wireless networks assume passive antenna elements in the farfield of the received signal.
The received signal is usually modelled as the superposition of signals impinging on the array from different angles.
The random phase shifts of signals that arrive from different angles are assumed to be independent.
For a general planar antenna placement, we get the covariance matrices
\begin{align}
    \vC_k = \beta_k \int_0^{2\pi} \va(\theta) \va(\theta)\he \eta_k(\theta) d\theta
\end{align}
where $\va(\theta)$ denotes the steering vector of the array from angle $\theta$ and $\eta_k(\theta)$ is the power density of the signals received from the different angles.
For example, $\eta_k(\theta)$ could be a mixture of Laplace distributions which describe different scatterer clusters~\cite{3gpp_spatial_2014}.
Note that for antennas with directivity, the antenna pattern is also included in $\eta_k(\theta)$ and assumed to be the same for each antenna.

For a uniform linear array with an antenna distance of half of the wavelength, we have
\begin{align}
    [\va(\theta)]_m = \exp(j \pi m \sin(\theta)).
\end{align}
Due to $\va(\theta) = \va(\pi - \theta)$ we can rewrite the integral as
\begin{align}
    \vC_k = \beta_k \int_{-\pi/2}^{\pi/2} \va(\theta) \va(\theta)\he (\eta_k(\theta) + \eta_k(\pi - \theta)) d\theta
\end{align}
where we assume for notational convenience that $p(\theta)$ is periodic with $2\pi$.
Let $\tilde \eta_k(\theta) = \eta_k(\theta) + \eta_k(\pi - \theta)$.
The $\vC_k$ are Toeplitz matrices with $[\vC_k]_{mn} = t_k[m-n]$ where
\begin{align}\label{eq:toeplitz_entries}
    t_k[m] = \beta_k \int_{-\pi/2}^{\pi/2} e^{-j\pi m \sin\theta} \tilde \eta_k(\theta)  d\theta.
\end{align}

If we assume that the power densities $\eta_k(\theta)$ are Riemann integrable, we can show that the limit $\lim_{M\rightarrow \infty} \vGamma$ exists.
We use results from~\cite{gray_toeplitz_2006} on the asymptotic equivalence of sequences of Toeplitz matrices to sequences of circulant matrices.
To this end, we substitute $\theta = \sin\inv(\omega/\pi)$ in \eqref{eq:toeplitz_entries} such that the $t_k[m]$ are the Fourier coefficients of a function $f_k(\omega)$.
We get
\begin{align}
    t_k[m] = \frac{1}{2\pi}\int_{-\pi}^{\pi} e^{-j m \omega} \tilde \eta_k(\sin\inv(\omega/\pi)) \frac{2\pi\beta_k }{\sqrt{\pi^2-\omega^2}} d\omega
\end{align}
and identify
\begin{align}
    f_k(\omega) = \frac{2\pi\beta_k }{\sqrt{\pi^2-\omega^2}} \tilde \eta_k(\sin\inv(\omega/\pi))
\end{align}
which is defined on $(-\pi, \pi)$. If we extend $f_k(\omega)$ periodically (defining $f(\pi+2n\pi) = 0, \forall n \in \mathbb Z$), we have
\begin{align}
    t_k[m] = \frac{1}{2\pi}\int_{0}^{2\pi} f_k(\omega) e^{-j m \omega} d\omega.
\end{align}
Now, applying the results from~\cite{gray_toeplitz_2006}, we define a circulant matrix $\widetilde \vC_k$ with the eigenvalues $f_k(2\pi m/M)$, $m=0,$ \dots, $M-1$ which is asymptotically equivalent to $\vC_k$.
Note that since $\widetilde \vC_k$ is circulant, the eigenvectors are given by the columns of the DFT matrix.

The asymptotic equivalence $\widetilde \vC_k \asymp_w \vC_k$ allows us to replace the covariance matrices in the trace expressions for the entries $[\vGamma]_{nk}$ in~\eqref{eq:gamma_elementwise} with the corresponding circulant matrices in the limit of $M$ going to infinity.
That is,
\begin{align}
    \lim_{M\rightarrow \infty} [\vGamma]_{nk} - [\vGamma_\text{ULA}]_{nk} = 0
\end{align}
where the entries of $\vGamma_\text{ULA}$ are given in~\eqref{eq:circulant_asy} at the top of this page.
The limit evaluates to
\begin{multline*}
    \lim_{M\rightarrow \infty} [\vGamma_\text{ULA}]_{nk} \\= \frac{1}{2\pi} \int_0^{2\pi} \frac{f_n(\omega) f_k(\omega)}{ (\sum_i p_i f_i(\omega) + 1)( \sum_{i\in\Omega_p} f_i(\omega) + 1/\ptr) } d\omega.
\end{multline*}

In other words, for large $M$ the covariance matrices $\vC_k$ can be replaced by the underlying spectra $f_k(\omega)$ in our calculations.
Since the value of $f_k(\omega)$ goes to infinity as $\omega \rightarrow \pm \pi$, the covariance matrices do not generally have bounded spectral norm even if the $\eta_k(\theta)$ are bounded (cf.~Condition~\ref{cond:bounded_norm}).
However, since the improper integral exists, this is not an issue.

In fact, we can resubstitute $\omega = \pi\sin(\theta)$ to get 
\begin{align}
    \lim_{M\rightarrow \infty} [\vGamma_\text{ULA}]_{nk} = \frac{1}{2} \int_{-\pi/2}^{\pi/2} \tilde \eta_n(\theta) \tilde \eta_k(\theta) \alpha(\theta) d\theta.
\end{align}
where
\begin{align}
    \alpha(\theta) = \frac{\cos(\theta)}{ (\sum_k p_k \tilde \eta_k(\theta) + \cos(\theta))( \sum_{k\in\Omega_p} \tilde \eta_k(\theta) + \cos(\theta)/\ptr) }.
\end{align}
Since $\alpha(\theta)$ is essentially non-zero for all bounded $\tilde\eta_k(\theta)$, linear independence of the densities $\tilde \eta_k(\theta)$ leads to linear independence of $\sqrt{\alpha(\theta)} \tilde\eta_k(\theta)$.
That is, we need linearly independent power densities $\tilde\eta_k(\theta)$ for users that employ the same pilot sequence to get a linear scaling of the SINR.

We also see, that if the power densities of users that employ the same pilot sequence have non-overlapping support, $\vGamma$ converges towards a diagonal matrix, i.e., pilot-contamination has no impact on the asymptotically equivalent SINR in this case.
This effect, namely that for non-overlapping angular support covariance matrices of different users are asymptotically orthogonal, was also observed and exploited in previous work~\cite{adhikary_joint_2013}.

\subsection{LMMSE Filter}
\label{app:lmmse}

In this section, we discuss the connection of the LMMSE filter to the BE receiver.
The LMMSE filter was used, e.g., in~\cite{bjornson_massive_2017} and~\cite{neumann_mse_2017} to analyze the performance for a large number of antennas.
This filter minimizes the MSE $\expec[\lvert s_k - \vg_k\he \vy \vert \obs \rvert^2]$, but at the same time it maximizes a lower-bound on the mutual information.
For the LMMSE filter, we get a lot of quantities that are analogous to those introduced in Section~\ref{sec:uplink} for the BE approach.
In the following, we put a tilde on top of the corresponding symbols to denote the equivalence.
For example, we have the optimal SINR $\gamma_k^\star$ for the BE and an analogous optimal SINR $\tilde \gamma_k^\star$ for the LMMSE.

We can use exactly the same technique as in Section~\ref{sec:uplink} to formulate a lower bound on the \emph{conditional} mutual information 
\begin{align}
    I(s_k;\hat s_k \vert \obs) \geq \log_2( 1 + \tilde\gamma_k\ul)
\end{align}
where $\obs$ denotes the set of all observations from the training phase.
We simply have to replace all expectations in~\eqref{eq:sinr_lower_bound} by conditional expectations leading to
\begin{align}
    \tilde\gamma_k\ul &= \frac{ p_k\abs{\expec[\vg_k\he\vh_k\vert\obs]}^2}{ \expec[\vg_k\he\vg_k\vert \obs] + p_k \var(\vg_k\he\vh_k\vert \obs) + \sum_{n\neq k} p_n \expec[\abs{\vg_k\he\vh_n}^2\vert \obs] }.
\end{align}
Since the filters $\vg_k$ are deterministic functions of the observations $\obs$, this can be simplified to
\begin{align}\label{eq:uplink_sinr_cond}
    \tilde\gamma_k\ul &= \frac{ p_k\abs{\vg_k\he\hest_k}^2}{ \vg_k\he\vg_k + p_k \vg_k\he \widetilde\vC_k \vg_k + \sum_{n\neq k} p_n \vg_k\he(\widetilde\vC_n + \hest_n\hest_n\he) \vg_k }.
\end{align}
with the MMSE estimates of the channel vector $\hest_k = \expec[\vh_k\vert \obs]$ and the covariance matrices of the corresponding estimation errors $\widetilde \vC_k = \expec[ (\hest_k - \vh_k)(\hest_k-\vh_k)\he]$.

By averaging the lower-bound on the conditional mutual information, we get another lower bound on the mutual information, i.e.,
\begin{align}
    I(s_k;\hat s_k) = \expec[ I(s_k;\hat s_k \vert \obs) ] \geq \expec[\log_2(1+ \tilde \gamma_k\ul)]
\end{align}
which is tighter than the lower bound we used in the Section~\ref{sec:uplink}, but does not yield a convenient analytical expression when we incorporate the BE approach.

The SINR in~\eqref{eq:uplink_sinr_cond} is maximized by the filter 
\begin{align}
    \vg_k^\star = \left( \id + \sum_n p_n\widetilde \vC_n + \sum_{n \neq k} p_n\hest_n\hest_n\he  \right)\inv \hest_k
\end{align}
leading to the optimal SINR
\begin{align}\label{eq:uplink_sinr_cond_opt}
    \tilde \gamma_k^\star = p_k \hest_k\he \left( \id + \sum_n p_n\widetilde \vC_n + \sum_{n\neq k} p_n\hest_n\hest_n\he  \right)\inv \hest_k.
\end{align}
The filter $\vg_k^\star$ differs from the LMMSE filter~\cite{bjornson_massive_2017,neumann_mse_2017}
\begin{align}
    \vg_k^\text{LMMSE} = \left( \id + \sum_n p_n\widetilde \vC_n + \sum_{n} p_n\hest_n\hest_n\he  \right)\inv \hest_k
\end{align}
only in a scaling factor, which arises due to the fact that we now sum over the channel estimates of all users inside the inverse.
Consequently, the LMMSE filter also maximizes the SINR in~\eqref{eq:uplink_sinr_cond}.

We use 
\begin{align}
    \widetilde \vZ = \id + \sum_n p_n \widetilde \vC_n \quad\text{and}\quad \Hest = [\hest_1,\ldots, \hest_K]
\end{align}
to define 
\begin{align}
    \vGamma^\obs = \Hest\he \widetilde\vZ\inv \Hest/M.
\end{align}
We can use the same steps as in Lemma~\ref{thm:sinr_reformulation} to reformulate the SINR in~\eqref{eq:uplink_sinr_cond_opt} to
\begin{align}\label{eq:uplink_sinr_cond_reform}
    \tilde \gamma_k^\star = M p_k \frac{\ve_k\tp \vGamma^\obs (\frac{1}{M}\vP\inv + \vGamma^\obs)\inv \ve_k }{\ve_k\tp (\frac{1}{M}\vP\inv + \vGamma^\obs)\inv \ve_k}.
\end{align}

With Conditions~\ref{cond:linear_independence} and~\ref{cond:bounded_norm} on the channel covariance matrices, this SINR has been shown to grow without bound in~\cite{bjornson_massive_2017}.
A convenient form of an asymptotically equivalent SINR can be derived along the lines of the derivation of the asymptotically equivalent MSE in~\cite{neumann_mse_2017}.

With Condition~\ref{cond:bounded_norm}, the entries of $\vGamma^\obs$ are asymptotically equivalent to their expectations.
That is,
\begin{align}
    \vGamma^\obs \asymp \widetilde \vGamma = \expec[\vGamma^\obs].
\end{align}

For users $k$ and $n$ that use a different pilot sequence we get $[\widetilde \vGamma]_{kn} = 0$.
Thus for proper arrangement of the user indices, the matrix $\widetilde \vGamma = \blkdiag(\widetilde\vGamma_1,\ldots,\widetilde\vGamma_{\Ttr})$ has block-diagonal structure with one block $\widetilde \vGamma_p$ for each pilot sequence.
We consider again the set of users $\Omega_p = \{1, \ldots, K_p\}$ which use the same pilot sequence $p$.
For this pilot sequence the corresponding block of $\widetilde \vGamma$ is given by
\begin{align}\label{eq:gamma_p}
    \widetilde \vGamma_p = \vXi\he (\vQ\inv \otimes \widetilde \vZ\inv) \vXi/M
\end{align}
Thus, we get the deterministic, asymptotically equivalent SINR
\begin{equation}\label{eq:uplink_sinr_cond_det}
    \tilde\gamma_k^\text{det} = M p_k \frac{\ve_k\tp \widetilde\vGamma_p (\frac{1}{M}\vP\inv + \widetilde\vGamma_p)\inv \ve_k }{\ve_k\tp (\frac{1}{M}\vP\inv + \widetilde\vGamma_p)\inv \ve_k},
\end{equation}
which differs from the SINR in~\eqref{eq:uplink_sinr_gmf_reform} only in the fact that the $\vZ$ in $\vGamma$ is replaced by $\widetilde \vZ$ in $\widetilde \vGamma_p$.
Note that this SINR only depends on the channel statistics and not on the instantaneous observations, which is not only useful for performance analysis but also for resource allocation in a practical system, e.g., power allocation and pilot sequence allocation.

Since $\widetilde \vZ$ contains the covariance matrices of the estimation errors instead of the channel covariance matrices we have $\widetilde \vZ \prec \vZ$.
From the formulation of the SINR in~\eqref{eq:uplink_sinr_gmf_opt}, it is clear that $\widetilde \vZ \prec \vZ$ leads to $\widetilde \gamma_k^\text{det} > \gamma_k^\star$ but also to $\widetilde \vGamma_p \succ \vGamma$.
Consequently, we can define another asymptotically equivalent SINR analogously to~\eqref{eq:uplink_sinr_asy}
\begin{align}\label{eq:uplink_sinr_cond_asy}
    \tilde \gamma_k^\text{asy} = M \frac{p_k}{\ve_k\tp \widetilde\vGamma_p\inv \ve_k}
\end{align}
and know from Lemma~\ref{lemma:general_conditions} that
\begin{align}
    \tilde \gamma_k^\star \asymp \tilde \gamma_k^\text{det} \asymp \tilde \gamma_k^\text{asy}
\end{align}
with $\liminf \tilde \gamma_k^\text{asy}/M > 0$.

Note that $\widetilde \vZ \not\asymp_w \vZ$ and thus
\begin{align}
    \liminf_M \tilde \gamma_k^\text{asy} / \gamma_k^\text{asy} > 1.
\end{align}
If the limit $\lim_M \tilde \gamma_k^\text{asy} / \gamma_k^\text{asy}$ exists, it gives us an idea about how many more antennas are needed for the BE approach to get a similar performance to the MMSE filter in the large array regime.  
However, we used different bounds on the mutual information to derive the asymptotic SINRs for the different approaches so the comparison is not entirely fair.
Unfortunately, we do not get convenient asymptotic expressions when we incorporate the MMSE filter in the bound in~\eqref{eq:sinr_lower_bound} or the BE in the SINR in~\eqref{eq:uplink_sinr_cond}.
Thus, the actual performance difference in practice has to be evaluated by monte-carlo simulations.

Nevertheless, there are scenarios where $\gamma_k^\star$ is close to $\tilde\gamma_k^\star$.
Clearly, for low SNR, $\vZ$ is similar to $\widetilde \vZ$.
Low SNR does not necessarily mean low SINR if we compensate for the reduced SNR by increasing the number of antennas.
We can show the following result.
\begin{thm}
    If the $p_k$ and $\ptr$ go to zero with $M$, such that $\lambda_k = M p_k \ptr = \order(1)$, we get $\gamma_k^\star \asymp \tilde\gamma_k^\star \asymp \gamma_k^\text{low}$ with
    \begin{align}\label{eq:sinr_low_snr}
        \gamma_k^\text{low} = \frac{\ve_k\tp \vLambda \vXi\he\vXi/M \left( \id + \vLambda \vXi\he\vXi/M \right)\inv \ve_k}{\ve_k\tp \left( \id + \vLambda\vXi\he\vXi/M \right)\inv \ve_k}.
    \end{align}
    where $\vLambda = \diag(\lambda_1,\ldots, \lambda_{K_p})$.
\end{thm}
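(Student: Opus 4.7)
The plan is to substitute the low-SNR scaling directly into the closed-form reformulations \eqref{eq:uplink_sinr_gmf_reform} and \eqref{eq:uplink_sinr_cond_det} and then identify the common limit. The key observation is that, under Condition~\ref{cond:bounded_norm}, $\norm{\vZ-\id}_2 \le \sum_n p_n\norm{\vC_n}_2 \to 0$ because every $p_n \to 0$, and by the same argument $\norm{\widetilde\vZ-\id}_2 \to 0$ (the error covariances $\widetilde\vC_n$ are dominated by $\vC_n$). Moreover, $\ptr \vQ = \id + \ptr\sum_{n\in\Omega_p}\vC_n$ satisfies $\norm{\ptr\vQ - \id}_2 \to 0$ as well. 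Taking inverses, we get $\vZ\inv, \widetilde\vZ\inv \to \id$ and $\vQ\inv/\ptr \to \id$ in spectral norm.

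Next I would factor out the common $\ptr$ by writing $\vGamma = \ptr\,\vGamma'$ with $\vGamma' = \vXi\he(\vQ^{-\Tr}/\ptr \otimes \vZ\inv)\vXi/M$, and likewise $\widetilde\vGamma_p = \ptr\,\widetilde\vGamma_p'$. Using the spectral-norm bounds on $\vXi\he\vXi/M$ that come from Condition~\ref{cond:bounded_norm} together with the standard Kronecker bound $\norm{\vA\otimes\vB - \id\otimes\id}_2 \le \norm{\vA-\id}_2\norm{\vB}_2 + \norm{\vB-\id}_2$, both $\vGamma'$ and $\widetilde\vGamma_p'$ converge in norm to $\vM := \vXi\he\vXi/M$. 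Since $\vM$ is a fixed $K_p\times K_p$ matrix (Condition~\ref{cond:linear_independence} gives it bounded inverse), this convergence is strong enough to pass through matrix inversion.

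The remaining work is a one-line substitution. Writing $\frac{1}{M}\vP\inv = \ptr\,\vLambda\inv$ and pulling the common factor $\ptr$ out of $\frac{1}{M}\vP\inv + \vGamma = \ptr(\vLambda\inv + \vGamma')$, the SINR in \eqref{eq:uplink_sinr_gmf_reform} collapses to
\begin{equation*}
    \gamma_k^\star = Mp_k\ptr \cdot \frac{\ve_k\tp \vGamma'(\vLambda\inv + \vGamma')\inv\ve_k}{\ve_k\tp(\vLambda\inv + \vGamma')\inv\ve_k} = \lambda_k \cdot \frac{\ve_k\tp \vGamma'(\vLambda\inv + \vGamma')\inv\ve_k}{\ve_k\tp(\vLambda\inv + \vGamma')\inv\ve_k},
\end{equation*}
and an identical manipulation on $\tilde\gamma_k^\star$ via \eqref{eq:uplink_sinr_cond_det} yields the same expression with $\widetilde\vGamma_p'$ in place of $\vGamma'$. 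Passing to the limit $\vGamma', \widetilde\vGamma_p' \to \vM$ and using $(\vLambda\inv + \vM)\inv = (\id + \vLambda\vM)\inv \vLambda$ together with $\ve_k\tp\vLambda = \lambda_k\ve_k\tp$ reduces both ratios to the claimed $\gamma_k^\text{low}$ in \eqref{eq:sinr_low_snr}.

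The main technical obstacle is ensuring that the convergence $\vGamma' \to \vM$ and $\widetilde\vGamma_p' \to \vM$ is uniform enough to commute with the inversion $(\vLambda\inv + \cdot)\inv$ and with the scaling by $\lambda_k$; but because $K_p$ is fixed and $\vLambda\inv$ has bounded inverse, this reduces to continuity of matrix inversion on a neighborhood of the fixed invertible matrix $\vLambda\inv + \vM$. The only delicate point worth spelling out carefully is the joint control of the two factors in the Kronecker product $\vQ^{-\Tr}/\ptr \otimes \vZ\inv$, since $\vQ\inv$ blows up as $\ptr\to 0$; absorbing the factor $\ptr$ once and for all into the redefinition of $\vGamma'$ makes this transparent.
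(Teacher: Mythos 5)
Your proof is correct and follows essentially the same route as the paper's: factor $\ptr$ and $\vLambda$ out of the reformulated SINRs from Lemma~\ref{thm:sinr_reformulation} and \eqref{eq:uplink_sinr_cond_det}, show that under the low-SNR scaling $\vZ$, $\widetilde\vZ$, and $\ptr\vQ$ all tend to $\id$ so that $\vGamma/\ptr$ and $\widetilde\vGamma_p/\ptr$ both reduce to $\vXi\he\vXi/M$, and substitute. Your write-up is somewhat more explicit than the paper's (spectral-norm rather than weak Frobenius-norm convergence, the Kronecker bound, continuity of inversion on the fixed $K_p\times K_p$ matrix), but the underlying argument is the same.
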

\begin{proof}
    We can reformulate
    \begin{align}
        \tilde \gamma_k^\star = \frac{\ve_k\tp \vLambda \frac{1}{\ptr}\vGamma^\obs (\id + \vLambda \frac{1}{\ptr}\vGamma^\obs)\inv \ve_k }{\ve_k\tp ( \id + \vLambda \frac{1}{\ptr}\vGamma^\obs)\inv \ve_k}
    \end{align}
    and
    \begin{equation}
        \gamma_k^\star = \frac{\ve_k\tp \vLambda \frac{1}{\ptr}\vGamma(\id + \vLambda \frac{1}{\ptr}\vGamma)\inv \ve_k }{\ve_k\tp ( \id + \vLambda \frac{1}{\ptr}\vGamma)\inv \ve_k}.
    \end{equation}
    For the given assumptions, we have $\vZ \asymp_w \widetilde \vZ \asymp_w \id$ and $\vQ_k /\ptr \asymp_w \id$, which leads to $\vGamma/\ptr \asymp \tilde\vGamma_p/\ptr \asymp \vXi\he\vXi/M$ which leads to the desired result.
\end{proof}
We control the SINR in~\eqref{eq:sinr_low_snr} by changing the $\lambda_k$.
Note that the SINR does not go to zero when the covariance matrices are linearly dependent.
However, if they are, the SINR $\gamma_k^\text{low}$ will saturate for large $\lambda_k$.
On the other hand, if the covariance matrices are independent, we can achieve any combination of SINRs with corresponding $\lambda_k$.

We found that the SINR expression in~\eqref{eq:sinr_low_snr} does not lead to a good approximation of the achievable rates for practical numbers for the system parameters.
However, we can see from the numerical results in Section~\ref{sec:results} that, indeed, the performance gap between the LMMSE filter and the BE vanishes for low SNR.

\bibliographystyle{IEEEtran}
\bibliography{IEEEabrv,literature}

\end{document}